\documentclass[a4paper,UKenglish,cleveref, autoref, thm-restate]{lipics-v2021}



\usepackage{tablefootnote}
\usepackage[ruled]{algorithm2e}

\bibliographystyle{plainurl}

\title{Concise and Efficient Quantum Algorithms for Distribution Closeness Testing \footnote{The authors are ordered alphabetically.}}



\author{Lvzhou Li}{Institute of Quantum Computing and Computer Theory, School of Computer and Engineering, Sun Yat-sen University, China}{lilvzh@mail.sysu.edu.cn}{[orcid]}{[funding]}

\author{Jingquan Luo}{Institute of Quantum Computing and Computer Theory, School of Computer and Engineering, Sun Yat-sen University, China}{luojq25@mail2.sysu.edu.cn}{[orcid]}{[funding]}

\authorrunning{Lvzhou Li and Jingquan Luo} 

\Copyright{Lvzhou Li and Jingquan Luo} 


\begin{CCSXML}
<ccs2012>
   <concept>
       <concept_id>10003752.10003753.10003758.10010625</concept_id>
       <concept_desc>Theory of computation~Quantum query complexity</concept_desc>
       <concept_significance>300</concept_significance>
       </concept>
 </ccs2012>
\end{CCSXML}

\ccsdesc[300]{Theory of computation~Quantum query complexity}

\keywords{Quantum algorithm, property testing, distribution property} 






\nolinenumbers 


\begin{document}

\maketitle

\begin{abstract}
    We study the impact of quantum computation on the fundamental problem of testing the property of distributions. In particular, we focus on  testing whether two unknown classical distributions are close or far enough, and  propose the  currently  best quantum algorithms for this problem under the metrics of $l^1$-distance and $l^2$-distance. Compared with the latest results given in \cite{gilyen2019distributional} which relied on the technique of quantum singular value transformation (QSVT), our algorithms not only have lower complexity,  but also are more concise.
\end{abstract}

\section{Introduction}

Property testing is a fundamental problem in theoretical computer science, with  the goal being to determine whether the target object has a certain property, under the promise that the object either has the property or is ``far'' from having that property. Quantum computing has a positive impact on many problems, which can improve the complexity of the problems, and property testing problems are no exception. There have been a lot of works on the topic of ``quantum property testing'', and the readers can refer to \cite{montanaro2013survey}. 


A fundamental problem in statistics and learning theory is to test properties of distributions. A few works have shown that quantum computing can speed up testing  properties of distributions, for example, \cite{bravyi2011quantum, montanaro2015quantum, gilyen2019distributional}. In this article, we focus on quantum algorithms for testing whether two unknown distributions are close or far enough, under the metric of $l^1$  or $l^2$ distance. 

\subsection{Query-access models and problem statements}

We firstly give the  definitions of classical and quantum access models for distributions, and then formally describe the problem we consider in this article. 

We consider two probability distributions on $[n]$, and usually denote them by $p$ and $q$.  The $l^\alpha$-distance is adopted to measure the distance between two distributions, which is defined as $|| p - q ||_\alpha := (\sum_{i = 1}^n | p_i - q_i ||^\alpha)^{1/\alpha}$. 

To get access to a classical distribution, the most natural model is sampling.

\begin{definition}[Sampling]
    A classical distribution $(p_i)^n_{i=1}$ is accessible via classical sampling if we can request samples from the distribution, i.e., get a random $i \in [n]$ with probability $p_i$.  
\end{definition}

There have been several quantum query-access models for classical distributions. We formulate the most general and natural one as follows, which has been studied in \cite{montanaro2015quantum, gilyen2019distributional, hamoudi2018quantum} and is adopted  throughout this paper.

\begin{definition}[Purified quantum query-access]\label{purified quantum query-access}
A classical distribution $(p_i)^n_{i=1}$ has purified quantum query-access if we have access to a unitary oracle $U_p$ (and its inverse and  controlled versions\footnote{The assumption of having access to the controlled version of $U_p$ is actually also made in \cite{montanaro2015quantum, gilyen2019distributional}.}) acting as \footnote{Without losing generality, we may assume that $B = C^n$, i.e., $B = span\{|1\rangle, \cdots, |n\rangle\}$.}
    \begin{equation*}
        U_p | 0 \rangle_A | 0 \rangle_B = \sum_{i=1}^n \sqrt{p_i} | \phi_i \rangle_A | i \rangle_B 
    \end{equation*}
such that $\langle \phi_i | \phi_j \rangle = \delta_{ij}$.
\end{definition}

Another quantum query-access model for classical distributions is to encode the distribution as a long string and denote probabilities by the frequencies\cite{bravyi2011quantum, chakraborty2010new, li2018quantum, montanaro2015quantum2}. One more model is called pure-state preparation access model\cite{arunachalam2018optimal}, which is more powerful than the model defined in Definition \ref{purified quantum query-access} and assumes that we have access to a unitary $U_p$ acting as $U_p |0\rangle = \sum_{i=0}^n\sqrt{p_i}|i\rangle$.
For more comparison of different quantum access models for classical distributions, one can refer to \cite{gilyen2019distributional, belovs2019quantum}.

In this article, we consider  quantum algorithms for the problem of deciding whether two distributions are close or far enough, which is formally described as follows:
\begin{definition}[{$l^\alpha$}-closeness testing]
    Given $\epsilon > 0$ and access to oracles generating two probability distributions $p$, $q$ on $[n]$, and promised that $p = q$ or $|| p - q ||_{\alpha} \geq \epsilon$, $l^\alpha$-closeness testing requires to  decide which the case is. The problem is called robust testing when the promise is $|| p - q ||_{\alpha} \leq (1-\nu)\epsilon$ or $|| p - q ||_{\alpha} \geq \epsilon$ for $\nu \in (0, 1] $ and one is asked to decide which the case is.
\end{definition}

\subsection{Contributions}
In this article, we give the the currently best quantum algorithms for the above closeness testing problems, i.e. Theorem \ref{theorem1} and its corollaries.
The  comparison of our results with previous classical and quantum results is presented in Table \ref{table1}. For the $l^2$-closeness testing problem, we propose a quantum algorithm (Corollary \ref{corollary1}) with query complexity $O(\frac{1}{\epsilon})$, improving the previous best complexity $O(\frac{1}{\epsilon}\log^3(\frac{1}{\epsilon})\log\log(\frac{1}{\epsilon}))$ given by \cite{gilyen2019distributional}. It is worthy  noting that our algorithm not only drops the polylog factors, but also is more concise, not relying on the QSVT\cite{gilyen2019quantum} frame. For the $l^1$-closeness testing problem, our quantum algorithm (Corollary \ref{corollary2})  is more efficient than all the existing quantum algorithms.

\begin{theorem}\label{theorem1}
    Given purified quantum query-access oracle $U_p$, $U_q$ for two classical distributions $p$, $q$ as in Definition \ref{purified quantum query-access}, for $\nu \in (0, 1]$ and $\epsilon \in (0, 1)$, Algorithm \ref{algorithm1} can decide whether $|| p - q ||_2 \leq (1-\nu)\epsilon $ or $|| p - q ||_2 \geq \epsilon$, with probability at least $\frac{2}{3}$, using $O(\frac{1}{\nu\epsilon})$ calls to $U_p$, $U_q$, their inverse or their controlled version.
\end{theorem}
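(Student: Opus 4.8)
The plan is to reduce $l^2$-closeness testing to estimating the scalar $\|p-q\|_2$ to additive accuracy $\Theta(\nu\epsilon)$, since the two promised cases are separated by a gap of $\nu\epsilon$ in this quantity. The route I would take is to build a single circuit $\mathcal{A}$ whose amplitude on a designated ``good'' flag equals $\frac{1}{2}\|p-q\|_2$, and then run amplitude estimation on $\mathcal{A}$. The starting observation is that the $B$-marginals $\rho_p=\mathrm{Tr}_A(U_p|0\rangle\langle 0|U_p^\dagger)=\sum_i p_i|i\rangle\langle i|$ and $\rho_q=\sum_i q_i|i\rangle\langle i|$ are diagonal and \emph{independent of the purifying vectors} $|\phi_i\rangle,|\psi_i\rangle$, and that $\|p-q\|_2^2=\mathrm{Tr}((\rho_p-\rho_q)^2)=\sum_i (p_i-q_i)^2$. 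So it suffices to prepare, as a subnormalized branch, the clean vector $\frac12\sum_i (p_i-q_i)|i\rangle$, whose norm is exactly $\frac12\|p-q\|_2$.

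The key gadget produces the coefficient $p_i$ (rather than $\sqrt{p_i}$) cleanly. First I would apply $U_p$ to registers $A,B$, copy $B$ into a fresh register $B'$ by CNOTs to obtain $\sum_i\sqrt{p_i}|\phi_i\rangle_A|i\rangle_B|i\rangle_{B'}$, and then apply $U_p^\dagger$ to $A,B$. Projecting $A,B$ onto $|0\rangle|0\rangle$ uses $\langle 0,0|U_p^\dagger(|\phi_i\rangle_A|i\rangle_B)=\langle p|(|\phi_i\rangle|i\rangle)=\sqrt{p_i}$, so the surviving state on $B'$ is $\sum_i p_i|i\rangle$ with the purifying register $|\phi_i\rangle$ uncomputed away; this is what removes any dependence on the $|\phi_i\rangle$. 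I would then introduce a control qubit in state $\frac{1}{\sqrt2}(|0\rangle+|1\rangle)$ and run this gadget with $U_p$ (resp.\ $U_q$) conditioned on the control being $0$ (resp.\ $1$), writing to a common $B'$; a final Hadamard on the control followed by post-selecting it on $|0\rangle$ yields the branch $\frac12\sum_i(p_i-q_i)|i\rangle$. The ``good'' flag of $\mathcal{A}$ is the simultaneous event that the control returns $0$ and the work registers $A,B$ return to $|0\rangle$, and its amplitude is $\frac12\|p-q\|_2$ as required; each invocation of $\mathcal{A}$ costs $O(1)$ (controlled) calls to $U_p,U_q$ and their inverses.

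Finally I would apply amplitude estimation to $\mathcal{A}$ to estimate $\sin\theta=\frac12\|p-q\|_2$. The decisive point is that amplitude estimation estimates the \emph{amplitude} to additive error $O(1/T)$ with $T$ invocations, so $T=O(1/(\nu\epsilon))$ resolves the $\Theta(\nu\epsilon)$ gap between $\frac12(1-\nu)\epsilon$ and $\frac12\epsilon$; the success probability is boosted to $\frac23$ in the standard way. The main obstacle, and the reason for the gadget above, is twofold. First, purification-independence: the naive coherent difference $|p\rangle-|q\rangle$ is contaminated by the unknown overlaps $\langle\phi_i|\psi_i\rangle$, so the difference must be formed at the level of the clean coefficients $p_i,q_i$, which is exactly what the uncompute step achieves. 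Second, the linear encoding: by arranging the target $\|p-q\|_2$ to appear as a first-power amplitude (not as a probability proportional to $\|p-q\|_2^2$), amplitude estimation yields the $O(1/(\nu\epsilon))$ scaling rather than the $O(1/(\nu\epsilon^2))$ one would get from estimating the collision-type quantity $\mathrm{Tr}((\rho_p-\rho_q)^2)$ directly. Verifying these two facts, together with the error analysis of amplitude estimation, is where the real work lies.
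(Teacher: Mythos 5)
Your proposal is correct and is essentially the paper's own algorithm: your uncompute gadget (apply $U_p$, copy register $B$, apply $U_p^\dagger$, post-select $A,B$ on $|0\rangle|0\rangle$ to leave $\sum_i p_i|i\rangle$) is exactly the paper's key Lemma~\ref{lemma:1}, and wrapping it in a Hadamarded control qubit and running amplitude estimation with $t=O(\frac{1}{\nu\epsilon})$ is exactly Algorithm~\ref{algorithm1}. Two minor corrections: with your initial control state $\frac{1}{\sqrt2}(|0\rangle+|1\rangle)$ the difference branch $\frac12\sum_i(p_i-q_i)|i\rangle$ appears on control outcome $|1\rangle$, not $|0\rangle$ (the paper starts from $\frac{1}{\sqrt2}(|0\rangle-|1\rangle)$ via $(HX)_D$ precisely so that it lands on $|0\rangle$); and your closing claim that estimating the probability $\frac14\|p-q\|_2^2$ directly would cost $O(1/(\nu\epsilon^2))$ is not accurate --- the paper thresholds exactly that probability and still obtains $O(\frac{1}{\nu\epsilon})$, because the amplitude-estimation error bound $2\pi\sqrt{a(1-a)}/t+\pi^2/t^2$ scales with $\sqrt{a}\approx\frac12\|p-q\|_2\le\frac{\epsilon}{2}$ in the relevant regime, which is equivalent to your additive $O(1/t)$ estimate of the amplitude itself.
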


\begin{corollary}\label{corollary1}
    Given purified quantum query-access oracle $U_p$, $U_q$ for two classical distributions $p$, $q$ as in Definition \ref{purified quantum query-access}, for $\epsilon \in (0, 1)$, there is a quantum tester that can decide whether  $ p = q $ or $|| p - q ||_2 \geq \epsilon$, with probability  at least $\frac{2}{3}$, using $O(\frac{1}{\epsilon})$ calls to $U_p$, $U_q$, their inverse or their controlled version.
\end{corollary}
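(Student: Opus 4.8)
The plan is to obtain this corollary as the boundary case $\nu = 1$ of Theorem \ref{theorem1}. First I would note that $\nu = 1$ lies in the admissible range $(0,1]$, so the theorem applies without modification. With $\nu = 1$ the ``near'' promise $\|p-q\|_2 \le (1-\nu)\epsilon$ collapses to $\|p-q\|_2 \le 0$, which by non-negativity of the norm is equivalent to $p = q$; the ``far'' promise $\|p-q\|_2 \ge \epsilon$ is unchanged. Thus the two cases to be distinguished in the corollary are exactly the two cases handled by Theorem \ref{theorem1} when its robustness parameter is set to its extreme value, and the tester claimed here is simply Algorithm \ref{algorithm1} invoked with $\nu = 1$.

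For complexity and correctness I would substitute $\nu = 1$ into the conclusion of Theorem \ref{theorem1}: the query count $O(\frac{1}{\nu\epsilon})$ becomes $O(\frac{1}{\epsilon})$, and the success probability $\frac{2}{3}$ is inherited directly. No amplification or rebalancing is needed, since the guarantees of the theorem already hold at every $\nu$ in its range and we merely read them off at the endpoint.

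I expect no real obstacle in this step: all the technical work has already been carried out in the proof of Theorem \ref{theorem1}, and the corollary only records what that proof yields at $\nu = 1$, recovering the non-robust ($p = q$ versus $\epsilon$-far) formulation of $l^2$-closeness testing with the advertised $O(1/\epsilon)$ query complexity. The single point worth checking is precisely that the robust promise degenerates correctly to exact equality in the completeness case, which the argument above confirms.
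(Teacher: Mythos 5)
Your proposal is correct and matches the paper's own proof, which likewise obtains the corollary by setting $\nu = 1$ in Theorem \ref{theorem1} so that the promise $\|p-q\|_2 \le (1-\nu)\epsilon$ degenerates to $p=q$ and the query count $O(\frac{1}{\nu\epsilon})$ becomes $O(\frac{1}{\epsilon})$. Your write-up simply spells out the degeneration of the completeness case more explicitly than the paper's one-line argument.
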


\begin{proof}
    It directly follows from Theorem \ref{theorem1} with $\nu \leftarrow 1$.
\end{proof}

\begin{corollary}\label{corollary2}
    Given purified quantum query-access oracle $U_p$, $U_q$ for two classical distributions $p$, $q$ as in Definition \ref{purified quantum query-access}, for $\epsilon \in (0, 1)$, there is a quantum tester that can decide whether  $ p = q $ or $|| p - q ||_1 \geq \epsilon$, with probability at least $\frac{2}{3}$, using $O(\frac{\sqrt{n}}{\epsilon})$ calls to $U_p$, $U_q$, their inverse or their controlled version.
\end{corollary}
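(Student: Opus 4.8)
The plan is to reduce the $l^1$-closeness testing problem to the $l^2$-closeness testing problem already solved in Corollary \ref{corollary1}, exploiting the standard norm inequality between the $l^1$ and $l^2$ distances. First I would recall that for any vector $v \in \mathbb{R}^n$, the Cauchy--Schwarz inequality gives $\|v\|_1 = \sum_{i=1}^n |v_i| \le \bigl(\sum_{i=1}^n |v_i|^2\bigr)^{1/2}\bigl(\sum_{i=1}^n 1\bigr)^{1/2} = \sqrt{n}\,\|v\|_2$, equivalently $\|v\|_2 \ge \|v\|_1/\sqrt{n}$. Applying this with $v = p - q$ shows that whenever $\|p-q\|_1 \ge \epsilon$, we also have $\|p-q\|_2 \ge \epsilon/\sqrt{n}$.

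Next I would translate the two promised cases into the $l^2$ setting. If $p = q$, then trivially $\|p-q\|_2 = 0$; if $\|p-q\|_1 \ge \epsilon$, then by the inequality above $\|p-q\|_2 \ge \epsilon/\sqrt{n}$. Hence distinguishing the case $p = q$ from the case $\|p-q\|_1 \ge \epsilon$ is implied by distinguishing $p = q$ from $\|p-q\|_2 \ge \epsilon'$ with the rescaled parameter $\epsilon' := \epsilon/\sqrt{n}$.

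Finally I would invoke Corollary \ref{corollary1} with $\epsilon \leftarrow \epsilon' = \epsilon/\sqrt{n}$. That tester decides $p = q$ versus $\|p-q\|_2 \ge \epsilon'$ with success probability at least $\tfrac{2}{3}$ using $O(1/\epsilon')$ oracle calls, which is exactly $O(\sqrt{n}/\epsilon)$ calls to $U_p$, $U_q$, their inverses or controlled versions, giving the claimed bound.

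Since the reduction is essentially immediate, there is no genuine obstacle; the only points to double-check are the direction of the norm inequality (we need the lower bound $\|\cdot\|_2 \ge \|\cdot\|_1/\sqrt{n}$, which is the correct one for a far-versus-equal test) and that the non-robust Corollary \ref{corollary1}, rather than the robust Theorem \ref{theorem1}, already suffices, because we only need to separate $p = q$ from the far case and not a near-far gap. I would also remark that the $\sqrt{n}$ loss is inherent to this reduction, as the inequality $\|v\|_1 \le \sqrt{n}\,\|v\|_2$ is tight for vectors whose entries have equal magnitude.
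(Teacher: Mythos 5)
Your proposal is correct and matches the paper's own argument: the paper likewise invokes the Cauchy--Schwarz bound $\|p-q\|_2 \ge \|p-q\|_1/\sqrt{n}$ and applies the $l^2$ tester with $\epsilon \leftarrow \epsilon/\sqrt{n}$ (citing Theorem \ref{theorem1} directly rather than Corollary \ref{corollary1}, which is the same thing with $\nu = 1$). Your additional remarks on the direction of the inequality and the tightness of the $\sqrt{n}$ loss are accurate but not needed.
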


\begin{proof}
    It directly follows from the Cauchy-Schwartz inequality $|| p - q ||_2 \geq \frac{1}{\sqrt{n}} || p 
 - q ||_1$ and taking $\epsilon \leftarrow \frac{\epsilon}{\sqrt{n}}$ in Theorem \ref{theorem1}.
\end{proof}

\begin{table}[htb]
\renewcommand\arraystretch{1.5} 
\setlength{\belowcaptionskip}{10pt}
\begin{center}   
\caption{ Summary of sample and query complexity results of $l^\alpha$-closeness testing problems.}
\label{table1} 
\begin{tabular}{|c|c|c|c|}   
\hline   &$l^1$-closeness testing & $l^2$-closeness testing \\ 
\hline   Classical sampling & $ \Theta(\max(\frac{n^{2/3}}{\epsilon^{4/3}}, \frac{n^{1/2}}{\epsilon^2})) $ \cite{chan2014optimal} & $\Theta(\frac{1}{\epsilon^2})$ \cite{chan2014optimal} \\
\hline   \cite{bravyi2011quantum} \tablefootnote{\cite{bravyi2011quantum, montanaro2015quantum} actually considered the problem of estimating the total variation distance between probability distributions, which is harder than $l^1$-closeness testing.}  & $ O(\frac{\sqrt{n}}{\epsilon^8}) $ & $/$ \\
\hline   \cite{montanaro2015quantum} & $ O(\frac{\sqrt{n}}{\epsilon^{2.5}}\log(\frac{1}{\epsilon})) $ & $/$ \\
\hline   \cite{gilyen2019distributional} & $ O(\frac{\sqrt{n}}{\epsilon}\log^3(\frac{\sqrt{n}}{\epsilon})\log\log(\frac{\sqrt{n}}{\epsilon})) $ & $O(\frac{1}{\epsilon}\log^3(\frac{1}{\epsilon})\log\log(\frac{1}{\epsilon}))$ \\
\hline   ours & $ O(\frac{\sqrt{n}}{\epsilon})$ & $O(\frac{1}{\epsilon})$ \\
\hline 
\end{tabular}   
\end{center}   
\end{table}

\subsection{Techniques
}
We firstly encode the $l^2$-distance between two distributions $p$ and $q$ into the amplitude of some quantum state and then apply the well-known quantum algorithm \textbf{Amplitude estimation} \cite{brassard2002quantum}. Different from \cite{gilyen2019distributional} which relied on the technique of quantum singular value transformation (QSVT), our method depends on the clever unitary design, and has a more concise form.

\subsection{Related works on distribution  testing}

The topic of reducing the complexity of property testing problems with the help of quantum computing has attracted a lot of research, and refer to \cite{montanaro2013survey} for more details. Here we briefly describe some results on distribution  testing. The first work was due to \cite{bravyi2011quantum} that considered three different problems including $l^1$-closeness testing, uniformity testing and orthogonality testing, and gave quantum upper bounds $O(\sqrt{n}/\epsilon^8)$, $\tilde{O}(n^{1/3})$, $O(n^{1/3}/\epsilon)$, respectively. \cite{chakraborty2010new} independently gave a upper bound $\tilde{O}(n^{1/3}/\epsilon^2)$ for uniformity testing and further showed that testing equality to any fixed distribution can be done with query complexity $\tilde{O}(n^{1/3}/\epsilon^5)$. \cite{montanaro2015quantum} improved the $\epsilon$-dependence of $l^1$-closeness testing to $\tilde{O}(\sqrt{n}/\epsilon^{2.5})$, and the upper bound is further improved to $\tilde{O}(\sqrt{n}/\epsilon)$ by \cite{gilyen2019distributional},where it also showed a upper bound $\tilde{O}(1/\epsilon)$ for $l^2$-closeness testing and a upper bound $\tilde{O}(\sqrt{nm}/\epsilon)$ for independence testing. Both \cite{bravyi2011quantum} and \cite{chakraborty2010new} showed that $\Omega(m^{1/3})$ quantum queries are also necessary for uniformity testing, following from a reduction from the collision problem\cite{aaronson2004quantum, ambainis2005polynomial, kutin2005quantum}.


\section{Preliminaries}

We use $[n]$ to denote the set $\{1, 2, \cdots, n\}$. For a  quantum system composed of $A$, $B$ and a unitary operator $U$, we use $(U)_A$ to denote that the unitary $U$ is operated on the subsystem $A$, while the other subsystem remains unchanged.

We would need a well-known quantum algorithm to estimate some probability $p$ quadratically more efficiently than  classical sampling:

\begin{theorem}[Amplitude estimation\cite{brassard2002quantum}]\label{amplitude estimation}

    Given a unitary $U$ and an orthogonal projector $\Pi$ such that $U | 0 \rangle = \sqrt{p} | \phi 
 \rangle + \sqrt{1-p} | \phi^{\perp} \rangle$ for some $p \in [0, 1]$, $\Pi | \phi \rangle = | \phi \rangle$ and $ \Pi | \phi^{\perp} \rangle = 0$, there is a quantum algorithm called \textbf{amplitude estimation} outputing $\tilde{p}$, such that 
    \begin{equation*}
        | \tilde{p} - p | \leq 2\pi \frac{\sqrt{p(1-p)}}{t} + \frac{\pi^2}{t^2}
    \end{equation*}
 with probability at least $8/\pi^2$, using $t$ calls to $U$, $U^\dagger$ and $I - 2\Pi$.
\end{theorem}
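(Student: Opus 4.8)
The plan is to prove the theorem through the standard route of \emph{Grover-style amplitude amplification combined with quantum phase estimation}. First I would write $p = \sin^2\theta$ for the unique $\theta \in [0, \pi/2]$ and introduce the Grover iterate $Q = -U(I - 2|0\rangle\langle 0|)U^\dagger (I - 2\Pi)$, which costs one call to each of $U$, $U^\dagger$, and $I - 2\Pi$ per application. The first structural step is to verify that the two-dimensional subspace $\mathrm{span}\{|\phi\rangle, |\phi^\perp\rangle\}$ is invariant under $Q$ and that, as a product of two reflections whose axes meet at angle $\theta$, $Q$ acts there as a rotation by $2\theta$. Consequently $Q$ has two eigenvectors $|\psi_\pm\rangle$ in this subspace with eigenvalues $e^{\pm 2i\theta}$, and the accessible state $U|0\rangle = \sin\theta\,|\phi\rangle + \cos\theta\,|\phi^\perp\rangle$ decomposes with equal weight $\tfrac12$ onto $|\psi_+\rangle$ and $|\psi_-\rangle$.

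Next I would run Fourier-based phase estimation on $Q$ with a control register of size $M = t$, starting from $U|0\rangle$, and output $\tilde p = \sin^2(\pi y/M)$ where $y$ is the measured integer. Because $U|0\rangle$ is split equally between the eigenphases $\pm 2\theta$, the measurement targets an estimate of either $\theta/\pi$ or $1 - \theta/\pi$ modulo $1$; since $\sin^2(\pi - x) = \sin^2 x$, both branches yield a good estimate of $p$, so their probabilities add. The core analytic step is the phase-estimation accuracy lemma: writing the outcome distribution as the squared Dirichlet kernel $\tfrac{1}{M^2}\sin^2\!\big(M\pi(\omega - y/M)\big)/\sin^2\!\big(\pi(\omega - y/M)\big)$, I would show that the probability of measuring a $y$ with $|\tilde\theta - \theta| \le \pi/M$ (where $\tilde\theta = \pi y/M$) is at least $8/\pi^2$, the worst case being when $M\theta/\pi$ is half-integral and the two nearest integers each carry probability $4/\pi^2$. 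This Dirichlet-kernel estimate, together with correctly combining the two symmetric branches, is the main obstacle and the only genuinely delicate part of the argument.

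Finally I would convert the angular accuracy into the stated amplitude bound. Conditioned on $|\tilde\theta - \theta| \le \pi/M$, I would use the identity $\sin^2\tilde\theta - \sin^2\theta = \sin(\tilde\theta + \theta)\sin(\tilde\theta - \theta)$ and set $\delta = \tilde\theta - \theta$. Bounding $|\sin\delta| \le |\delta| \le \pi/M$ and $|\sin(2\theta + \delta)| \le \sin 2\theta + |\sin\delta| \le \sin 2\theta + \pi/M$ (using $\theta \in [0,\pi/2]$, so $\sin 2\theta \ge 0$), and recalling $\sin 2\theta = 2\sin\theta\cos\theta = 2\sqrt{p(1-p)}$, gives
\[ |\tilde p - p| \;\le\; \Bigl(\sin 2\theta + \tfrac{\pi}{M}\Bigr)\tfrac{\pi}{M} \;=\; \frac{2\pi\sqrt{p(1-p)}}{M} + \frac{\pi^2}{M^2}. \]
Substituting $M = t$ yields exactly the claimed inequality, holding with probability at least $8/\pi^2$, while the total cost is $t$ applications of $Q$, hence $t$ calls to each of $U$, $U^\dagger$, and $I - 2\Pi$, as claimed.
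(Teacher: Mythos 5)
Your proposal is correct and takes essentially the same route as the source of this statement: the paper does not prove the theorem itself but imports it from \cite{brassard2002quantum}, and your argument --- the Grover iterate $Q$ acting as a rotation by $2\theta$ with $p=\sin^2\theta$, phase estimation whose Dirichlet-kernel analysis yields the $8/\pi^2$ success probability (with the two symmetric eigenphase branches combined via $\sin^2(\pi-x)=\sin^2 x$), and the identity $\sin^2\tilde\theta-\sin^2\theta=\sin(\tilde\theta+\theta)\sin(\tilde\theta-\theta)$ converting angular error into $2\pi\sqrt{p(1-p)}/t+\pi^2/t^2$ --- is precisely the original proof there. No gaps worth flagging.
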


\section{Quantum Algorithm}

In this section, we give the proof of Theorem \ref{theorem1}. The formal description of our quantum algorithm is given in Algorithm \ref{algorithm1}.


\begin{algorithm}[htb]
    \SetKwInOut{KWProcedure}{Procedure}
    \SetKwInput{Runtime}{Runtime}
    \caption{Quantum tester for (robust) distribution $l^2$-closeness testing problem with purified quantum query-access.}\label{algorithm1}  
    \LinesNumbered
    \KwIn {Purified quantum query-access oracle for two classical distributions $p$ and $q$ as in Definition \ref{purified quantum query-access}; two real numbers $\nu \in (0, 1]$ and $\epsilon \in (0, 1)$.}
    \KwOut {Output \textbf{CLOSE} when $|| p - q ||_2 \leq (1-\nu)\epsilon $, or output \textbf{FAR} when $|| p - q ||_2 \geq \epsilon $, with probability at least $\frac{2}{3}$.}
    \Runtime{$O(\frac{1}{\nu\epsilon})$.}
    
    \KWProcedure{}
    Prepare the initial state $| \psi_0 \rangle = | 0 \rangle_A | 0 \rangle_B | 0 \rangle_{C} | 0 \rangle_D$, where the system $C$ has the same dimension as the system $B$, and the last system is a qubit (a two-dimensional Hilbert space).

    Perform  $U := (H)_D (\tilde{U}_p \otimes | 0 \rangle \langle 0 | + \tilde{U}_q \otimes | 1 \rangle \langle 1 |)_{ABCD} (HX)_D $ on  $| \psi_0 \rangle$, where $\widetilde{U}_p := (U_p^{\dagger})_{AB} (U_{copy})_{BC} (U_p)_{AB}$ (similar for $\tilde{U}_q$), $U_{copy}$ is a unitary operation mapping  $| i \rangle | 0 \rangle \rightarrow | i \rangle | i \rangle $ for $i = 1, 2, \cdots, n$, $H$ is the Hadamard gate, and $X$ is the Pauli-X gate. 

    Run amplitude estimation in Theorem \ref{amplitude estimation}, with $U$ defined above, $\Pi := | 0 \rangle_A | 0 \rangle_B \langle 0 |_A \langle 0 |_B \otimes I \otimes | 0 \rangle_D \langle 0 |_D$, and $t := \frac{10\pi}{\nu\epsilon}$, and denote by $\Delta '$ the result.

    \eIf{$\Delta ' < (\frac{1}{4} - \frac{\nu}{8})\epsilon^2$}{output \textbf{CLOSE}}{output \textbf{FAR}}
    
\end{algorithm}


To prove the main theorem, we firstly give a key lemma as follows:

\begin{lemma}\label{lemma:1}
Let $U_p$ be a unitary quantum query-access oracle for a classical distribution $p$ as in Definition \ref{purified quantum query-access}, and let $\tilde{U}_p := (U_p^{\dagger} \otimes I ) (I \otimes U_{copy}) (U_p \otimes I)$, and $\Pi := | 0 \rangle \langle 0 | \otimes | 0 \rangle \langle 0 | \otimes I$.  Then we have 
    \begin{equation*}
        \tilde{U}_p |0\rangle | 0 \rangle | 0 \rangle = \sum_{i = 1}^n p_i | 0 \rangle | 0 \rangle | i \rangle + | 0^{\perp} \rangle,
    \end{equation*}
where $ \Pi | 0^{\perp} \rangle = 0$.
\end{lemma}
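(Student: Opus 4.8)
The plan is to propagate the input state $|0\rangle_A|0\rangle_B|0\rangle_C$ through the three factors of $\tilde U_p = (U_p^\dagger)_{AB}(U_{copy})_{BC}(U_p)_{AB}$ in order, and then isolate the component supported in the range of the projector $\Pi = |0\rangle\langle0|_A \otimes |0\rangle\langle0|_B \otimes I_C$. First I would apply $(U_p)_{AB}$, which by Definition \ref{purified quantum query-access} gives $\sum_{i} \sqrt{p_i}\,|\phi_i\rangle_A|i\rangle_B|0\rangle_C$. Next, $(U_{copy})_{BC}$ duplicates the label from register $B$ into register $C$, yielding $\sum_{i} \sqrt{p_i}\,|\phi_i\rangle_A|i\rangle_B|i\rangle_C$. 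These two steps are routine bookkeeping.

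The crux is the final factor $(U_p^\dagger)_{AB}$. Since it acts only on $A,B$, I would analyze it termwise on each $|\phi_i\rangle_A|i\rangle_B$, the key quantity being the overlap with $|0\rangle_A|0\rangle_B$. Using $U_p|0\rangle_A|0\rangle_B = \sum_{j} \sqrt{p_j}\,|\phi_j\rangle_A|j\rangle_B$ together with the orthonormality $\langle\phi_j|\phi_i\rangle = \delta_{ij}$, one computes
\[
\langle0|_A\langle0|_B\,U_p^\dagger\,|\phi_i\rangle_A|i\rangle_B \;=\; \sum_{j} \sqrt{p_j}\,\langle\phi_j|\phi_i\rangle\langle j|i\rangle \;=\; \sqrt{p_i}.
\]
Hence each term admits the decomposition $U_p^\dagger|\phi_i\rangle_A|i\rangle_B = \sqrt{p_i}\,|0\rangle_A|0\rangle_B + |g_i\rangle_{AB}$, where the residual $|g_i\rangle_{AB}$ is defined as the difference and is therefore orthogonal to $|0\rangle_A|0\rangle_B$, i.e. $\bigl(|0\rangle\langle0|\otimes|0\rangle\langle0|\bigr)|g_i\rangle_{AB}=0$. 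This overlap computation is the heart of the argument, and I expect it to be the main point to get right, since it is precisely where the orthonormality assumption on the $|\phi_i\rangle$ is used; the rest is assembling the pieces.

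Finally I would substitute this decomposition back into the sum. Collecting $\sum_{i} \sqrt{p_i}\,\bigl(\sqrt{p_i}\,|0\rangle_A|0\rangle_B + |g_i\rangle_{AB}\bigr)|i\rangle_C$ splits it into the diagonal part $\sum_{i} p_i\,|0\rangle_A|0\rangle_B|i\rangle_C$ plus the residual $|0^\perp\rangle := \sum_{i} \sqrt{p_i}\,|g_i\rangle_{AB}|i\rangle_C$. Since each $|g_i\rangle_{AB}$ is annihilated by $|0\rangle\langle0|_A\otimes|0\rangle\langle0|_B$, applying $\Pi$ termwise gives $\Pi|0^\perp\rangle = 0$, which is exactly the claimed statement and completes the proof.
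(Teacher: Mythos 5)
Your proof is correct and follows essentially the same route as the paper's: both reduce the lemma to the overlap computation $\langle 0|_A\langle 0|_B\,U_p^\dagger\,|\phi_i\rangle_A|i\rangle_B=\sqrt{p_i}$ using the orthonormality of the $|\phi_i\rangle$. If anything, your version is slightly more complete, since you explicitly construct the residual $|0^\perp\rangle$ and verify $\Pi|0^\perp\rangle=0$, whereas the paper only computes the matrix elements $\langle 0|\langle 0|\langle k|\tilde U_p|0\rangle|0\rangle|0\rangle=p_k$ and leaves that last step implicit.
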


\begin{proof}
For $k \in \{1, \cdots, n\}$, we have
    \begin{align*}
        \langle 0 | \langle 0 | \langle k | \tilde{U}_p | 0 \rangle | 0 \rangle | 0 \rangle 
        & = \langle 0 | \langle 0 | \langle i | (U_p^{\dagger} \otimes I ) (I \otimes U_{copy}) (U_p \otimes I) |0\rangle | 0 \rangle | 0 \rangle \\
        & = (\sum_{i=1}^n \sqrt{p_i} \langle \phi_i | \langle i | \langle k) | (I \otimes U_{copy}) (\sum_{j=1}^n \sqrt{p_i} | \phi_j \rangle | j \rangle) | 0 \rangle \\  
        & = (\sum_{i=1}^n \sqrt{p_i} \langle \phi_i | \langle i | \langle k) | (\sum_{j=1}^n \sqrt{p_i} | \phi_j \rangle | j \rangle | j \rangle )\\  
        & = p_k.
    \end{align*}
\end{proof}

\begin{proof}[Proof of Theorem \ref{theorem1}]
    Let $U := (H)_D (\tilde{U}_p \otimes | 0 \rangle \langle 0 | + \tilde{U}_q \otimes | 1 \rangle \langle 1 |)_{ABCD} (HX)_D$ and $\Pi := | 0 \rangle_A | 0 \rangle_B \langle 0 |_A \langle 0 |_B \otimes I \otimes | 0 \rangle_D \langle 0 |_D$ as defined in Algorithm \ref{algorithm1}. We have
    \begin{align}
        | \psi \rangle
        & = U | 0 \rangle | 0 \rangle | 0 \rangle | 0 \rangle 
          = (H)_D (\tilde{U}_p \otimes | 0 \rangle \langle 0 | + \tilde{U}_q \otimes | 1 \rangle \langle 1 |)_{ABCD} (HX)_D | 0 \rangle | 0 \rangle | 0 \rangle | 0 \rangle \\
        & = (H)_D (\tilde{U}_p \otimes | 0 \rangle \langle 0 | + \tilde{U}_q \otimes | 1 \rangle \langle 1 |)_{ABCD} | 0 \rangle | 0 \rangle | 0 \rangle (\frac{1}{\sqrt{2}} | 0 \rangle - \frac{1}{\sqrt{2}} | 1 \rangle) \\
        & = (H)_D (\sum_{i=1}^n \frac{p_i}{\sqrt{2}} | 0 \rangle | 0 \rangle | i \rangle | 0 \rangle - \sum_{i=1}^n \frac{q_i}{\sqrt{2}} | 0 \rangle | 0 \rangle | i \rangle | 1 \rangle + | 0^{\perp} \rangle) \\
        & = \sum_{i=1}^n \frac{p_i - q_i}{2} | 0 \rangle | 0 \rangle | i \rangle | 0 \rangle + \sum_{i=1}^n \frac{p_i + q_i}{2} | 0 \rangle | 0 \rangle | i \rangle | 1 \rangle + | 0^{\perp} \rangle,
    \end{align}
    where $( | 0 \rangle \langle 0 | \otimes | 0 \rangle \langle 0 | \otimes I \otimes I) | 0^{\perp} \rangle = 0$.

    Therefore, we have $|| \Pi \psi \rangle ||^2 = \frac{1}{4} \sum_{i = 1}^n (p_i - q_i)^2 = \frac{|| p - q ||_2^2 }{4}$. Let $\Delta := \frac{|| p - q ||_2^2 }{4} $, and $\Delta '$ be the estimate for $\Delta$ computed by amplitude estimation with $t := \frac{20\pi}{\nu \epsilon}$ as in Algorithm \ref{algorithm1}.

    It remains to show that if $|| p - q ||_2 \leq (1-\nu) \epsilon$ we have $\Delta' \leq (\frac{1}{4} - \frac{\nu}{8})\epsilon^2$ with high probability, and if $|| p - q ||_2 \geq \epsilon$, we have $\Delta' > (\frac{1}{4} - \frac{\nu}{8})\epsilon^2$ with high probability. If $|| p - q ||_2 \leq (1-\nu) \epsilon$, then $\Delta \leq \frac{\epsilon^2 - \nu \epsilon^2 }{4}$; by Theorem \ref{amplitude estimation}, with probability at least $\frac{8}{\pi^2}$, the estimate $\Delta '$ satisfies $| \Delta ' - \Delta| \leq \frac{2\pi \sqrt{\Delta}}{t} + \frac{\pi^2}{t^2} < \frac{\nu\epsilon^2}{8}$, and thus $\Delta' \leq (\frac{1}{4} - \frac{\nu}{8})\epsilon^2$. 
    If $|| p - q ||_2 \geq \epsilon$, then $\Delta \geq \frac{ \epsilon^2 }{4}$. We have the following cases:
    \begin{itemize}
        \item $\Delta \leq \epsilon^2$: by Theorem \ref{amplitude estimation}, with probability at least $\frac{8}{\pi^2}$, the estimate $\Delta ' $ satisfies $| \Delta ' - \Delta | \leq \frac{2\pi \epsilon}{t} + \frac{\pi^2}{t^2} < \frac{\nu\epsilon^2}{8}$, and thus $\Delta' > (\frac{1}{4} - \frac{\nu}{8})\epsilon^2$;

        \item $\Delta > \epsilon^2$: by Theorem \ref{amplitude estimation}, with probability at least $\frac{8}{\pi^2}$, the estimate $\Delta ' $ satisfies $| \Delta ' - \Delta | \leq \frac{2\pi \sqrt{\Delta}}{t} + \frac{\pi^2}{t^2} < \frac{\Delta}{2}$, and thus $\Delta ' > \frac{\epsilon^2}{2} > (\frac{1}{4} - \frac{\nu}{8})\epsilon^2$.
    \end{itemize}
    \textbf{Complexity.} The query complexity of constructing $U$ is $O(1)$, and it takes $O(\frac{1}{\nu \epsilon})$ queries to $U$ to run amplitude estimation. Therefore, the overall query complexity of Algorithm \ref{algorithm1} is $O(\frac{1}{\nu \epsilon})$.
\end{proof}



\section{Conclusion and open problems}

In this article, we have proposed   concise and efficient quantum algorithms for testing whether two classical distributions are close or far  enough under the metrics of $l^1$-distance and $l^2$-distance, with  query complexity  $O(\sqrt{n}/\epsilon)$ and $O(1/\epsilon)$, respectively, improving the results in \cite{gilyen2019distributional} which relied on the technique of quantum singular value transformation(QSVT).

In the following we list some open problems for future work.

\begin{itemize}
    \item Can we improve the precision dependence of estimating the total variation distance between probability distributions? The currently best upper bound is $O(\sqrt{n}/\epsilon^{2.5})$ due to \cite{montanaro2015quantum}.

    \item Can we prove an $\Omega(\sqrt{n}/\epsilon)$ lower bound on $l^1$-closeness testing?

    \item Can our Lemma \ref{lemma:1} be applied to other distribution  testing problems?
\end{itemize}

\bibliography{lipics-v2021-sample-article}


\end{document}